\title{Representing Strategies}
\author{Hein Duijf \and Jan Broersen\thanks{Both authors gratefully acknowledge financial support from the ERC-2013-CoG project REINS, nr. 616512}}
\newtheorem{definition}{Definition}
\newtheorem{proposition}{Proposition}
\newtheorem{corollary}{Corollary}
\newcommand{\thistheoremname}{}
\newtheorem*{genericthm}{\thistheoremname}
\def\sstit{\textup{\;\text{sstit}}}
\def\acc{\textup{\;\text{acc}}}
\def\Ags{\textup{\textit{Ags}}}
\def\Prop{\textit{Prop}}
\def\STIT{\textup{\ensuremath{\mathsf{STIT}}}}
\def\GSTRAT{\textup{\ensuremath{\mathsf{G.STRAT}}}}
\def\ATL{\textup{\ensuremath{\mathsf{ATL}}}}
\def\ATEL{\textup{\ensuremath{\mathsf{ATEL}}}}
\let\textquotedbl="
\def\lnext{\textup{\ensuremath{\mathsf{X}}}}
\def\henceforth{\textup{\ensuremath{\mathsf{G}}}}
\def\Ag{\textup{\ensuremath{\mathsf{Ag}}}}
\def\St{\textup{\ensuremath{\mathsf{St}}}}
\def\Act{\textup{\ensuremath{\mathsf{Act}}}}
\def\act{\textup{\ensuremath{\mathsf{act}}}}
\def\out{\textup{\ensuremath{\mathsf{out}}}}
\def\CGS{\ensuremath{(\Ag,$ $\St,$ $\Act,$ $\act,$ $\out)}}
\def\CGM{\ensuremath{(\Ag,$ $\St,$ $\Act,$ $\act,$ $\out,$ $L)}}
\def\CEGM{\ensuremath{(\Ag,$ $\St,$ $\{\sim_i\mid i\in\Ag\},$ $\Act,$ $\act,$ $\out,$ $L)}}
\def\outset{\ensuremath{\out\texttt{\char`_}\textup{\ensuremath{\mathsf{set}}}}}
\def\outplays{\ensuremath{\out\texttt{\char`_}\textup{\ensuremath{\mathsf{plays}}}}}
\def\Cond{\textup{\ensuremath{\mathsf{Cond}}}}
\def\Eff{\textup{\ensuremath{\mathsf{Eff}}}}
\def\perf{\textup{\;\text{perf}}}
\newif\ifshow
\begin{document}

\maketitle

\begin{abstract}
Quite some work in the ATL-tradition uses the differences between various types of strategies (positional, uniform, perfect recall) to give alternative semantics to the same logical language. This paper contributes to another perspective on strategy types, one where we characterise the differences between them on the syntactic (object language) level. This is important for a more traditional knowledge representation view on strategic content. Leaving differences between strategy types implicit in the semantics is a sensible idea if the goal is to use the strategic formalism for model checking. But, for traditional knowledge representation in terms of object language level formulas, we need to extent the language. This paper introduces a strategic STIT syntax with explicit operators for knowledge that allows us to charaterise strategy types. This more expressive strategic language is interpreted on standard ATL-type concurrent epistemic game structures. We introduce rule-based strategies in our language and fruitfully apply them to the representation and characterisation of positional and uniform strategies. Our representations highlight crucial conditions to be met for strategy types. We demonstrate the usefulness of our work by showing that it leads to a critical reexamination of coalitional uniform strategies.
\end{abstract}

\section{Introduction}


To make way for strategic reasoning on the syntactic level we need to know how to represent the various types of strategies that have been proposed in the literature. In this paper, our aim in particular will be to provide syntactic counterparts for the various strategy types proposed in the $\ATL$-tradition, such as positional strategies \cite{alur_alternating-time_2002} and uniform strategies \cite{jamroga_agents_2004}. Our proposed language, an extension of strategic $\STIT$ including temporal and epistemic modalities and action types, is sufficiently expressive for representing positional and uniform strategies.%
\footnote{Recently, a syntactic characterization of uniform strategies in \emph{Epistemic Strategy Logic} was presented \cite{knight_dealing_2015}. Our representation differs in that we use rule-based strategies.}
Whereas the introduction of positional and uniform strategies are intuitively appealing, our characterizations elucidate the underlying conditions to be met for these semantically defined strategy types. Although emulating ability-modalities akin to the $\ATL$-tradition is not the main objective here, some hints are provided that guide such a future endeavour. 

Our representation of uniform positional strategies enhances the understanding of a \emph{coalition's} uniform strategy. Our result naturally invites different ways to distribute relevant strategic knowledge. This observation complements \cite{agotnes_knowledge_2015} where it is argued that a coalition's uniform strategy is imprecise only in the ``mode'' of the coalition's knowledge, referring either to common, distributive or mutual knowledge. 

To further explain our view and approach, we will subsequently answer the following three questions:

\begin{enumerate}
\item what \emph{is} a strategy?
\item how to represent the \emph{performance} of a strategy?
\item how to \emph{characterize} a rule-based strategy?
\end{enumerate}

{\bf what is a strategy?}\\
It is remarkable how strong the notion of a strategy varies throughout the literature on strategic reasoning. In $\ATL$ frameworks\footnote{See the seminal work \cite{alur_alternating-time_2002} and extensions such as $\mathsf{CATL}$ \cite{van_der_hoek_logic_2005}, and \textit{Strategy Logic} \cite{chatterjee_strategy_2007} and \cite{mogavero_reasoning_2013}.} a strategy is a mapping that assigns an action type to every finite history of system states; in $\STIT$ frameworks\footnote{See the seminal work \cite{belnap_facing_2001}, \cite{horty_agency_2001}  and the recent extension to strategic action \cite{broersen_stit-logic_2009}.}, since acting means restricting the possible futures, a strategy is identified with the futures it allows; finally Dynamic Logics\footnote{See \cite{van_benthem_extensive_2002}. 
} take strategies to be temporally extended act structures of sequences of (atomic) actions. This paper starts with the $\ATL$ conception of strategy types and ultimately provides representations thereof. 

{\bf how to represent the \emph{performance} of a strategy?}\\
To express that a strategy is actually performed, we use insights from $\STIT$ frameworks. Other frameworks express strategy performance only implicitly, safely tucked away under path quantifiers ($\ATL$) or in quantifiers in the central modalities (Dynamic logic). This follows from the observation that he main operators $\langle\langle C\rangle\rangle\varphi$ and $[\alpha]\varphi$ in these systems are interpreted as `coalition $C$ is able to ensure $\varphi$' and `after executing action $\alpha$, $\varphi$ holds'. This reveals that $\ATL$ enables one to reason about strategic ability and Dynamic Logics support reasoning about the results of actions, but not reasoning about the \emph{performance} of actions or strategies here and now.

{\bf how to \emph{characterize} a rule-based strategy?}\\
Strategies are typically communicated in the form of condition-action rules. Therefore it makes sense to also logically represent them in that form using a suitable language. Recently, $\ATL$ has been extended to enable reasoning about rule-based strategies in \cite{zhang_representing_2014}. One of the main themes there is the representation of (semantic) strategies by formulas of their proposed language, which includes rule-based strategies. A different study in \cite{ramanujam_dynamic_2008} evaluates formulas at game-strategy pairs thereby combining aspects of \emph{Game Logic}\footnote{See the original work \cite{parikh_logic_1985} and the overview \cite{pauly_game_2003}.} and strategic reasoning. They propose a multi-sorted language to express the \emph{structure} of the strategy. The logic includes two types of conditional strategies $[\psi\mapsto a]^i$ and $\pi\mapsto\sigma$, which are interpreted as `player $i$ chooses move $a$ whenever $\psi$ holds' and `player $i$ sticks to the specification given by $\sigma$ if on the history of play, all moves made by $\bar{i}$ conform to $\pi$', respectively. The $\psi$ in the first formula is restricted to boolean combinations of propositional letters, and the second formula represents that the other players \emph{have acted} in accordance with $\pi$. We, however, do not want to commit ourselves to these restrictions in the language or regarding the conditions; Most importantly, to represent uniform strategies epistemic conditions have to be allowed.

The paper is organized as follows: In Section \ref{Section: CGM} the well-known Concurrent Epistemic Game Models (see \cite{alur_alternating-time_2002} and \cite{van_der_hoek_cooperation_2003}) are introduced to provide the basis for our semantics. In contrast to the usual $\ATL^{\ast}$-divide between path and state formulae, a key idea from $\GSTRAT$%
\footnote{$\GSTRAT$ is first introduced in \cite{broersen_stit-logic_2009}. It is an extension of basic $\STIT$ frameworks to a strategic and multi-agent setting.}
to evaluate formulas against tuples consisting of the state, path and \emph{the current strategy profile} is implemented in Section \ref{Section: Logical approach}. In parallel, a logical language is introduced including action types and temporal, epistemic and agency operators. As is common in $\STIT$ frameworks, the structure of a strategy can be described by specifying the (temporal) properties it ensures. This allows us to view a rule-based strategy as a set of condition-effect rules in Section \ref{Section: Rule based strategies}, which are then used in Section \ref{Section: Respresentation Results} to investigate how this logical language can be fruitfully applied to represent strategy types. Certain types of rule-based strategies are used to represent positional and uniform strategies, thereby uncovering crucial conditions for these semantic strategy types. Some novel implications are drawn on coalition's uniform strategies by appealing, not to different ``modes'' of coalitional knowledge but, to ways of distributing the relevant strategic knowledge. We conclude in Section \ref{Conlusion}, proof sketches are to be found in the appendix. 


\section{\label{Section: CGM}Concurrent Game Models}

In this section we introduce Concurrent Game Models (see \cite{alur_alternating-time_2002}) and Concurrent Epistemic Game Models (see \cite{van_der_hoek_cooperation_2003}). Our treatment will be roughly in line with \cite{agotnes_knowledge_2015} although our notation on histories will differ to neatly support the syntactic approach in the next section.

\begin{definition}[Concurrent game structure and model]\label{Definition: CGM} 
A \emph{concurrent game structure }(CGS) is a tuple $\mathcal{S}=\CGS$ which consists of:

\begin{enumerate}
\item a finite, non-empty set of \emph{agents} $\Ag=\{1,$ $\ldots,$ $K\}$;
the subsets of $\Ag$ are called \emph{coalitions};
\item a non-empty set of \emph{states} $\St$;
\item a non-empty set of \emph{action types} $\Act$;
\item an \emph{action manager }function $\act:\Ag\times\St\rightarrow\mathcal{P}(\Act)$ assigning to every player $i$ and state $q$ a non-empty set of actions
available for execution by $i$ at $q$. \\
An \emph{action profile} is a tuple of actions $\alpha=\langle\alpha_{1},\ldots,\alpha_{K}\rangle\in\Act^{K}$. The action profile is \emph{executable at the state} $q$ if $\alpha_{i}\in\act(i,q)$ for every $i\in\Ag$. We denote by $\act(q)$ the subset of $\prod_{i\in\Ag}\act(i,q)$ consisting of all action profiles executable at the state $q$.
\item 
\begin{itemize}
\item[(a)] a \emph{transition function} $\out$ that assigns an \emph{outcome state} $\out(q,\alpha)$ to every state $q$ and every action profile $\alpha\in\act(q)$.%
\footnote{In \cite[p.~553 -- our boldfacing, emphasis in original]{agotnes_knowledge_2015} they introduce $\out$ as ``a \emph{transition function} $\out$ that assigns a \textbf{unique} \emph{outcome state} $\out(q,\alpha)$ to every state $q$ and every action profile $\alpha$ which is executable at $q$''.  We are puzzled about what this required uniqueness adds to the fact that $\out$ is a function. Although they do not explicitly mention what is meant, we think this uniqueness is represented in (b). The examples that they consider are in line with (b). This addition is crucial for our main Proposition \ref{Proposition: main result}(3) as it implies that two strategies are play-equivalent if and only if they are identical.} 
\item[(b)] for every two states $q,q'$ there is at most one $\alpha\in\act(q)$ such that $\out(q,\alpha)=q'$, i.e.~if a state transition is labelled, then it has a unique label.
\end{itemize}
\end{enumerate}

A \emph{concurrent game model }(CGM) is a CGS endowed with a labeling $L:\St\rightarrow\mathcal{P}(\Prop)$ of the states with sets of atomic propositions from a fixed set $\Prop$. As usual, the labeling describes which atomic propositions are true at a given state.
\end{definition}

We fix a concurrent game structure $\mathcal{S}=\CGS$. An action profile is used to determine a successor of a state using the transition function $\out$. The set of the available action profiles is denoted by $\act(q)$, consequently the set of \emph{possible successors} of $q$ is the set of states $\out(q,\alpha)$ where $\alpha$ ranges over $\act(q)$. An infinite sequence $\lambda=q_0 q_1 q_2 \cdots$ of states from $\St$ is called a \emph{play} if $q_{k+1}$ is a successor of $q_k$ for all positions $k\geq 0$. $\lambda[k]$ denotes the $k$-th component $q_k$ in $\lambda$, and $\lambda[0,k]$ denotes the initial sequence, or \emph{history}, $q_0\cdots q_k$ of $\lambda$. 

A \emph{perfect recall strategy for an agent} $i$ is a function $s_i$ that maps every history $\lambda[0,k]$ to an action type $s_i(\lambda[0,k])\in\act(\lambda[k])$. A \emph{positional (aka memoryless) strategy for an agent} $i$ is a function $s_i$ that maps every state $q$ to an action type. A \emph{perfect recall strategy for a coalition} $C\subseteq\Ags$, also called a coalitional strategy for $C$, is a function $s_C$ mapping each agent $i\in C$ to a perfect recall strategy $s_C(i)$. Positional strategies for coalitions are defined analogously. A \emph{strategy profile} $s$ is a coalitional strategy for $\Ag$. A coalitional strategy $s_C$ \emph{extends} $s'_{C'}$, notation $s_C\sqsupseteq s'_{C'}$, if and only if $C\supseteq C'$ and $s'_{C'}(i)=s_C(i)$ for every $i\in C$. Given a strategy profile, we often write $s_C$ for the coalitional strategy for $C$ satisfying $s_C\sqsubseteq s$.

The set $\outplays(\lambda[0,k],s_C)$ of outcome plays of a strategy $s_C$ for $C$ at a history $\lambda[0,k]$ is the set of all plays $\lambda'=q_0q_1\cdots$ such that $\lambda'[0,k]=\lambda[0,k]$ and, for every $l\geq k$, there is an action profile $\alpha=\langle\alpha_1,\ldots,\alpha_K\rangle$ satisfying $\alpha_i=s_C(i)(\lambda'[l])$ for all $i\in C$ and $q_{l+1}=\out(q_l,\alpha)$.

Our formal results rely on the notion of \emph{play-equivalence}:

\begin{definition}[Play-equivalence] Let $\lambda$ be a play, let $k$ be a position, let $C$ a coalition, and let $s_C$, $s'_C$ perfect recall strategies. We say that $s_C$ and $s'_C$ are \emph{play-equivalent} at $\lambda[0,k]$ if and only if\\ $\outplays(\lambda[0,k],s_C)=\outplays(\lambda[0,k],s'_C)$.
\end{definition}

It is standard to model the agent's incomplete information by extending concurrent game models:

\begin{definition}[Concurrent Epistemic Game Model] \label{Definition: CEGM} 
A \emph{concurrent epistemic game model }(CEGM) is a tuple
$\mathcal{M}=\CEGM$
which consists of a CGM $\CGM$ and indistinguishability
relations $\sim_{i}\subseteq\St\times\St$, one for each agent. Furthermore, it is assumed that (1) $\sim_{i}$ is an equivalence relation, and (2) $q\sim_{i}q'$ implies that $\act(q,i)=\act(q',i)$.
\end{definition}

These indinstinguishability relations are straightforwardly extended to histories by: $\lambda[0,k]\sim_{i}\lambda'[0,k']$  iff $k=k'$ and for every $l\leq k$ we have $\lambda[l]\sim_{i}\lambda'[l]$. Then we introduce a third strategy type: 
\begin{itemize}
\item[] a \emph{uniform strategy for an agent} $i$ is a perfect recall strategy $s_i$ satisfying: for all histories $\lambda[0,k]$, $\lambda'[0,k']\in\St^{+}$, if $\lambda[0,k]\sim_{i}\lambda'[0,k']$ then $s_{i}(\lambda[0,k])=s_{i}(\lambda'[0,k'])$.
\end{itemize}

A \emph{uniform coalitional strategy for coalition} $C$ is a function $s_C$ mapping each agent $i\in C$ to a uniform strategy $s_C(i)$.


In the remainder, we mean ``perfect recall strategies" when writing ``strategies'', unless otherwise specified.

\section{\label{Section: Logical approach}Strategic language}

In the previous section we outlined the models that provide the basis for the semantics of our logical enterprise. In the current section we introduce our logical framework, which is inspired by \cite{broersen_stit-logic_2009}.%
\footnote{Whereas \cite{broersen_stit-logic_2009} uses models based on the $\STIT$-tradition, here Concurrent Game Models are used for interpreting the language.} 
First, we introduce the syntax of the logical language. Second, we present the truth conditions of the logical formalism. It is crucial that we evaluate formulas with respect to tuples $\langle k,\lambda,s\rangle$ which include \emph{the current strategic course of action} $s$ (inspiration from \cite{broersen_stit-logic_2009}). Finally, some crucial observations on the resulting logical formalism are presented by reviewing the underlying models. 

\begin{definition}[Syntax]\label{Definition: Syntax} 
Fix a set of propositional letters $\Prop$, a finite set of agents $\Ags$, and a set of action types $\Sigma$. The formulas of the language $\mathcal{L}$ are given by: 

\begin{center}
$\varphi::=p\mid\alpha_{C}\mid\varphi\land\varphi\mid\lnot\varphi\mid\lnext\varphi\mid\henceforth\varphi\mid\Box\varphi\mid[C\sstit]\varphi\mid K_{i}\varphi$,
\end{center}

\noindent where $p$ ranges over $\Prop$, $C$ ranges over subsets of $\Ags$,
and $\alpha_{C}$ ranges over $\Sigma^{C}$. 
\end{definition}

Given a CEGM $\mathcal{M}=\CEGM$ with $\Ag=\Ags$ and $\Act=\Sigma$, these formulas will be evaluated at tuples $\langle k,\lambda,s\rangle$  consisting of a strategy profile $s$, a play $\lambda$ such that $\lambda\in\outplays(\lambda[0],s)$, and a position $k$. This means that the truth of formulas is evaluated with respect to a current state $\lambda[k]$, a current history $\lambda[0,k]$, a current future $\lambda[k,\infty)$, and a current strategy profile $s$. Obviously, by incorporating the current strategy profile into the worlds of evaluation we get a semantic explication of the performance of a strategy.

The central agency operator is the modality $[C\sstit]\varphi$ which stands for `the coalition $C$ strategically sees to it that $\varphi$ holds'. Relative to a tuple $\langle k,\lambda,s\rangle$ the modality $[C\sstit]\varphi$ is interpreted as `the coalition $C$ is in the process of executing $s_{C}$ thereby ensuring the (temporal) condition $\varphi$'. In addition, the language includes temporal modalities $\lnext\varphi$ and $\henceforth\varphi$ which are interpreted, relative to a tuple $\langle k,\lambda,s\rangle$, as `$\varphi$ holds in the next moment after $\lambda[k]$ on $\lambda$' and `$\varphi$ holds on all future moments after $\lambda[k]$ on $\lambda$', respectively. In contrast to this longitudinal dimension of time, the language includes a temporal modality $\Box\varphi$ for historical necessity. The modality $\Box\varphi$ is interpreted, relative to a tuple $\langle k,\lambda,s\rangle$, as `$\varphi$ holds on any tuple at $\lambda[0,k]$'. This highlights that the truth of $\Box\varphi$ does not depend on the dynamic aspects represented by the current future and the current strategy profile, we call such formulas \emph{moment-determinate}. Finally, we include epistemic modalities $K_{i}\varphi$, one for each agent, which are interpreted as `agent $i$ knows that $\varphi$'. The presented syntax and semantics are formally connected by the truth conditions for the syntactic clauses:

\begin{definition}[Semantics]
\label{Definition: Semantics} Let $\mathcal{M}=\CEGM$
be a CEGM with $\Ag=\Ags$ and $\Act=\Sigma$. The points of evaluation
for our logical formulas are tuples $\langle k,\lambda,s\rangle$
consisting of a strategy profile $s$, a play $\lambda\in\outplays(\lambda[0],s)$,
and a position $k$. The truth conditions are given by inductive definitions
(suppressing the model $\mathcal{M}$ and not listing standard propositional truth conditions):

\begin{description}
\item $\langle k,\lambda,s\rangle\vDash\alpha_{C}$ $\Leftrightarrow$ $s_{C}(\lambda[0,k])=\alpha_{C}$
\item $\langle k,\lambda,s\rangle\vDash\lnext\varphi$ $\Leftrightarrow$ $\langle k+1,\lambda,s\rangle\vDash\varphi$ 
\item $\langle k,\lambda,s\rangle\vDash\henceforth\varphi$ $\Leftrightarrow$ for each $l\geq k$  : $\langle l,\lambda,s\rangle\vDash\varphi$ 
\item $\langle k,\lambda,s\rangle\vDash\Box\varphi$ $\Leftrightarrow$ for every strategy $s'$ and every $\lambda'\in\outplays(\lambda[0,k],s')$ we have $\langle k,\lambda',s'\rangle\vDash\varphi$
\item $\langle k,\lambda,s\rangle\vDash[C\sstit]\varphi$ $\Leftrightarrow$ for every $s'\sqsupseteq s_{C}$ and every $\lambda'\in\outplays(\lambda[0,k],s')$ we have $\langle k,\lambda',s'\rangle\vDash\varphi$ 
\item $\langle k,\lambda,s\rangle\vDash K_{i}\varphi$ $\Leftrightarrow$ for every $\lambda', k'$  such that $\lambda'[0,k']\sim_{i}\lambda[0,k]$ and any $s'$ we have $\langle k',\lambda',s'\rangle\vDash\varphi$.\footnote{Little is known about properties of this logic and such a formal inquiry would lead us to far astray from the current enterprise. To guide some of the formal intuitions of the reader we mention some validities without proof or conceptual motivation: 
$\Box$ and $[i\sstit]$ are $\mathsf{S5}$-modalities, 
the $[i\sstit]$-operator is monotone in its agency argument, i.e.\,for $A\subseteq B$ we have $\vDash[A\sstit]p\rightarrow[B\sstit]p$,
the temporal part is the standard discrete linear temporal logic containing $\lnext$ and $\henceforth$,
some interaction principles are $\vDash\Box\lnext p\rightarrow \lnext\Box p $, 
$\vDash[C\sstit]\lnext p\rightarrow \lnext[C\sstit] p $, 
$\vDash[C\sstit]\Box p \leftrightarrow \Box p$,
$\vDash[C\sstit]\lnext[C\sstit]p \leftrightarrow [C\sstit]\lnext p$,
$\vDash[C\sstit]\henceforth[C\sstit]p \leftrightarrow [C\sstit]\henceforth p$,
and $\Diamond[A\sstit]p\land\Diamond[B\sstit]q\rightarrow\Diamond[A\cup B\sstit](p\land q)$ for disjoint coalitions $A$ and $B$ (independence of agency). }
\end{description}
\end{definition}

With the semantics in place, we gather some crucial observations:

\begin{itemize}

\item Because the truth of a propositional letter only depends on the current state, it is not surprising that the truth of any propositional formula only depends on the current state. Therefore, we will often write $\lambda[k]\vDash\varphi$ instead of $\langle k,\lambda,s\rangle\vDash\varphi$ for a propositional formula $\varphi$. This is connected to the familiar divide in $\ATL^\ast$ syntax between state and path formulas. 
\item As mentioned before, a formula $\varphi$ is moment-determinate if $\vDash\varphi\leftrightarrow\Box\varphi$. The truth of such formulas only depends on the history, so we will often write $\lambda[0,k]\vDash\varphi$ instead of $\langle k,\lambda,s\rangle\vDash\varphi$ for such formulas. 
\item Formulas of the form $K_i\varphi$ are moment-determinate. In particular, an agent does not know what he is doing.
\item  The truth of a coalitional action type only depends on the current strategy of that coalition, i.e. $\langle k,\lambda,s\rangle\vDash\alpha_C\leftrightarrow[C\sstit]\alpha_C$.
\item The formula $\Diamond\alpha_C$ expresses that a coalitional action type $\alpha_C$ is executable at a state. 
\item Observe that only the truth conditions for the $[C\sstit]$-operator and the action types $\alpha_C$ involve the current $C$-strategy profile. It is clear that adding (at least one of) these is necessary to express that a certain strategy is performed. The action types are inherited from a bottom-up perspective on strategies with the action types as atomic building blocks. In contrast, the $[C\sstit]$-operator incorporates a top-down view in that a strategy is described by the properties it ensures. \end{itemize}

\section{\label{Section: Rule based strategies}Rule-based strategies}

A rule-based strategy consists of rules. Such a rule is composed of a condition and an effect, thereby incorporating the intuition that a rule is triggered under certain conditions and has a certain effect: 

\begin{definition}[Rule-based strategies]
\label{Definition: Rule-based strategies} A \emph{rule-based strategy} $RS$ is a finite set of condition-effect rules $\{c_{1}\mapsto e_{1},\ldots,c_{N}\mapsto e_{N}\}$, which represents that condition $c_{n}$ triggers effect $e_{n}$. We denote the set of conditions occurring in such a rule-based strategy $RS$ by $\Cond(RS)$, likewise the effects by $\Eff(RS)$. 
\end{definition}

Performing a rule-based strategy means that in case a rule is triggered one ensures that the corresponding effect is realized:

\begin{definition}[Performing rule-based strategies]
\label{Definition: Performing rule-based strategies} Henceforth we fix a CEGM $\CEGM$ with $\Ag=\Ags$ and $\Act=\Sigma$. Let $RS=\{c_{1}\mapsto e_{1},\ldots,c_{N}\mapsto e_{N}\}$
be a rule-based strategy. First, we say that \emph{a coalition
acts according to rule-based strategy }$RS$ \emph{at} $\langle k,\lambda,s\rangle$, denoted by $\langle k,\lambda,s\rangle\vDash[C\acc]RS$,
if and only if 
\[
\langle k,\lambda,s\rangle\vDash[C\sstit]\bigwedge_{n\leq N}(c_{n}\rightarrow[C\sstit]e_{n}).
\]


\noindent Second, we say that \emph{a coalition $C$ performs }$RS$\emph{ at}
$\langle k,\lambda,s\rangle$, denoted by $\langle k,\lambda,s\rangle\vDash[C\perf]RS$,
if and only if 
\[
\langle k,\lambda,s\rangle\vDash[C\sstit]\henceforth[C\acc]RS.
\]
\end{definition}

The formula $[C\acc]S$ is interpreted, relative to a tuple $\langle k,\lambda,s\rangle$, as `coalition $C$ is in the process of executing strategy $s_C$ thereby ensuring that the conditionals are met'. Informally, it means that coalition $C$ is currently performing a strategy that ensures that in case a condition holds he performs a strategy ensuring the corresponding effect.

Although the nested $[C\sstit]$ operator may be puzzling at first sight, it makes perfect sense. To argue in favour we break the formula down. A rule of the rule-based strategy is formalized as $c\rightarrow [C\sstit]e$, but one should not forget that here we intend to formulate that a coalition is acting according to such a rule-based strategy. This is expressed by the second $[C\sstit]$ operator, which guarantees that one is acting accordingly not only at the current play, but also at all plays in $\outplays(\lambda[0,k],s_C)$. To formalize that a coalition is performing a rule-based strategy, we add the $\henceforth$ operator to express that it is henceforth acting according to strategy $RS$.

There are two ways in which a rule-based strategy can be unsatisfactory: (a) the agent might not be able to perform a certain rule-based strategy, or (b) the action description given by a certain rule-based strategy can be underspecified. So a rule-based strategy can be viewed as a partial perfect recall strategy which is defined at a history if and only if it is possible to act accordingly and there is but one way to do so.\footnote{This resembles the informal notion of deterministic strategies in \cite[p.~204]{zhang_representing_2014}: ``move recommendations are always unique'' for deterministic strategies.} To investigate this more thoroughly, we continue in our logical framework.

Can a given rule-based strategy $RS$ be viewed as a partial strategy? There is a straightforward way to attempt this whenever the conditions are moment-determinate
, i.e.~$\mathcal{M}\vDash c\leftrightarrow\Box c$ for each $c\in\Cond(RS)$.%
\footnote{If the conditions are not moment-determinate ``a choice of an agent, at a given point of a play, may depend on choices other agents can make in the future or in counterfactual plays" (cf.~the study on "behavioral strategies" in \cite[p.~149]{mogavero_behavioral_2014}). } 
Using $RS$, we define a \emph{partial} coalition strategy $s^{RS}_{C}:\St^{+}\rightarrow\Act^{C}$ by:

\begin{center}
\begin{tabular}{lll}
$s^{RS}_{C}(\lambda[0,k])=\alpha_{C}$  & iff &  $\lambda[0,k]\vDash\Diamond[C\acc]RS$ $\land $ $\Box([C\acc]RS\rightarrow[C\sstit]\alpha_C)$.
\end{tabular}
\end{center}

The first conjunct says that the coalition is able to act accordingly, whereas the second conjunct says that performing action profile $\alpha_C$ is the only way to do so. 

Clearly, this partial coalition strategy is defined at a history $\lambda[0,k]$ if the following conditions hold:
\begin{description}
\item [{(1)}] there is a $c\in\Cond(RS)$ such that $\lambda[0,k]\vDash c$,

\item [{(2)}] there is at least one $\alpha_{C}$ such that $\lambda[0,k]$ $\vDash$ $\Diamond[C\sstit]\alpha_C$ $\land$ $\Box([C\sstit]\alpha_C\rightarrow[C\acc]RS)$, and

\item [{(3)}] there is at most one $\alpha_{C}$ such that $\lambda[0,k]$ $\vDash$ $\Diamond[C\sstit]\alpha_C$ $\land$ $\Box([C\sstit]\alpha_C\rightarrow[C\acc]RS)$.

\end{description}

The failure of (1) and the failure of (3) signify that the rule-based strategy is underspecified either because no rule has been triggered or because there are multiple ways to act accordingly. The failure of (2), however, indicates a practical inconsistency or a conflict in the rule-based strategy $RS$, because it implies that there is no way to act accordingly.

Before proceeding, we extend $\outplays(-,-)$ to pertain also to partial strategies:

\begin{definition}
Let $s_C$ be a partial perfect recall strategy. We define 
\begin{center}
 $\outplays(\lambda[0,k],s_{C})$  $:=$  $\{\lambda'\in\St^{\omega} \mid\lambda'\sqsupset\lambda[0,k]$ and for every $l\geq k$ we have: if $s_{C}(\lambda'[0,l])$ is defined then $\lambda[l+1]\in\outset(\lambda[0,l],s_{C})\}$.
\end{center}
\end{definition}

To investigate the perfect recall strategies represented by a rule-based strategy, we use the partial strategy it defines:

\begin{proposition}
\label{Proposition: main result} Let $RS$ be a rule-based
strategy with only moment-determinate conditions. Let $s^{RS}_{C}$ be
the partial coalition strategy defined by $RS$. 
Then 
\begin{enumerate}
\item for any profile $\langle k',\lambda',s'\rangle$ such that $s^{RS}_C$ is defined on $\lambda'[0,k']$ we have that the following are equivalent: (a) $\langle k',\lambda',s'\rangle\vDash[C\acc]RS$ and (b) $s^{RS}_C(\lambda'[0,k'])=s'_C(\lambda'[0,k'])$.
\item Let $\langle k',\lambda',s'\rangle$ be a profile such that $s^{RS}_{C}$ is defined on all histories in $\outplays(\lambda'[0,k'],s'_{C})$. Then $\langle k',\lambda',s'\rangle\vDash[C\perf]RS$ if and only if $s^{RS}_{C}$ and $s_{C}'$ are play-equivalent at $\lambda'[0,k']$.
\end{enumerate}
\end{proposition}


This establishes a crucial connection between the syntactic notion of performing a rule-based strategy and the semantic notion of a (partial) perfect recall strategy. In the following subsections we use this link to represent positional and uniform strategies up to play-equivalence by rule-based strategies. 

The notion of play-equivalence stems from the $\STIT$ views in our formalism. From a $\STIT$ perspective a strategy is identified by the futures it allows, so two play-equivalent strategies not only \emph{appear} to be same strategy, they \emph{are} the same strategy. 

\section{Representation Results}\label{Section: Respresentation Results}

\subsection{Representing positional strategies}\label{Subsection: Representing positional strategies}

In this subsection, we prove that rule-based strategies can be used to represent positional strategies. For that purpose we introduce a specific type of rule-based strategies:

\begin{definition}
\label{Definition: Proposition-action strategies} 
A \emph{proposition-action strategy} $RS$ for a coalition $C$ is
a rule-based strategy such that the conditions $c$ and effects $e$
are respectively of the form:
\begin{center}
\begin{tabular}{lll}
$c  ::=  p\mid c\land c\mid\lnot c$ & \quad & $e  ::=  \alpha_{C}$,
\end{tabular}
\end{center}
where $p$ ranges over $\Prop$ and $\alpha_{C}$ ranges over $\Sigma^{C}$. 
\end{definition}

Because the effects are of the form $\alpha_C$, there can be at most one way to act according to a proposition-action strategy whenever one of the conditions is triggered.  This motivates our definition of \emph{completeness}; a notion that plays a key role in our findings in the correspondence between proposition-action strategies and positional strategies:

\begin{definition}[Completeness]
\label{Definition: Completeness of strategies}
We say that \emph{a rule-based strategy $RS=\{c_{1}\mapsto e_{1},\ldots,c_{N}\mapsto e_{N}\}$
is complete at} $\lambda[0,k]$ if and only if $\lambda[0,k]\vDash\Box\henceforth\bigvee_{n\leq N}c_{n}$,
i.e.~in any possible future point one of the conditions in $S$ is
triggered.\footnote{Compare \cite[p.~207]{zhang_representing_2014}:  ``a complete strategy provides the player with a ``complete guideline'' that always provides the player with one or more suggestions how to act when it is his move''. }
\end{definition}

\begin{proposition}
\label{Proposition: Representing positional strategies}Let $s$ be a strategy profile, and let $\lambda\in\outplays(\lambda[0],s)$. Suppose there is a proposition-action strategy $RS$ for $C$ that is complete at $\lambda[0]$ such that $\langle0,\lambda,s\rangle\vDash[C\perf]RS$.
Then there is a positional strategy $\hat{s}_{C}$ that is play-equivalent to $s_{C}$ at $\lambda[0]$. 
\end{proposition}


This shows that performing a complete proposition-action strategy implies that one is performing a strategy that is play-equivalent to a positional strategy. In a sense, this means that the strategies represented by complete proposition-action strategies are positional strategies. 

The converse does not hold in general, which can be shown by providing a CGM containing a positional strategy differing at two propositionally equivalent states. So to prove the converse we have to restrict our investigation to CGMs in which enough states are propositionally definable:

\begin{proposition}
\label{Proposition: Propositional definability and representing positional strategies}
Let $s_{C}$ be a positional $C$-strategy,and let $\lambda\in\outplays(\lambda[0],s_{C})$. Suppose that $s_{C}(\St)\subseteq\Act^{C}$ is finite and that every $s_{C}^{-1}(\alpha_{C})\subseteq\St$ is propositionally definable. Then there is a proposition-action strategy $RS$ for $C$ that is complete at $\lambda[0]$ such that $\langle0,\lambda,s\rangle\vDash[C\perf]RS$.

\end{proposition}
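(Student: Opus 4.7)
The plan is to read off the rule-based strategy $RS$ directly from the positional strategy $s_C$, with one rule per action profile in its image. Since $s_C(\St)$ is finite by hypothesis, I enumerate it as $\alpha_C^1,\ldots,\alpha_C^N$. For each $n$ the preimage $s_C^{-1}(\alpha_C^n)\subseteq\St$ is propositionally definable, so I pick a propositional formula $\varphi_n$ defining it. I then set $RS := \{\varphi_1\mapsto\alpha_C^1,\ldots,\varphi_N\mapsto\alpha_C^N\}$. By construction each condition is a boolean combination of atoms and each effect is a coalitional action profile, so $RS$ is a proposition-action strategy.

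Next I verify completeness at $\lambda[0]$. The positional strategy $s_C$ is total on $\St$, so at any state $q$ some $\varphi_n$ holds, namely the one with $s_C(q)=\alpha_C^n$. Hence $\bigvee_{n\leq N}\varphi_n$ is true at every state of the model, and in particular $\lambda[0]\vDash\Box\henceforth\bigvee_{n\leq N}\varphi_n$.

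The remaining task is to verify $\langle 0,\lambda,s\rangle\vDash[C\perf]RS$ for any strategy profile $s$ extending $s_C$ with $\lambda\in\outplays(\lambda[0],s)$, where such an $s$ exists by filling in the agents outside $C$ so as to match $\lambda$. Unfolding the semantics of $[C\sstit]\henceforth[C\acc]RS$, I fix arbitrary $s'\sqsupseteq s_C$, $\lambda'\in\outplays(\lambda[0],s')$, $l\geq 0$, and then further arbitrary $s''\sqsupseteq s_C$ and $\lambda''\in\outplays(\lambda'[0,l],s'')$, and I must check each implication $\varphi_n\to[C\sstit]\alpha_C^n$ at $\langle l,\lambda'',s''\rangle$. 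If $\lambda''[l]\vDash\varphi_n$ then by choice of $\varphi_n$ one has $s_C(\lambda''[l])=\alpha_C^n$; because $s''_C=s_C$ and $s_C$ is positional, $s''_C(\lambda''[0,l])=\alpha_C^n$, so by the truth clause for action types $\langle l,\lambda'',s''\rangle\vDash\alpha_C^n$. The validity $\vDash\alpha_C\leftrightarrow[C\sstit]\alpha_C$ noted after Definition \ref{Definition: Semantics} then upgrades this to $[C\sstit]\alpha_C^n$, which is what was needed.

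I do not expect a deep obstacle here: the construction of $RS$ is essentially forced by the definability hypothesis, and the main care lies in bookkeeping the three nested universal $[C\sstit]$-quantifications hidden inside $[C\perf]RS$. The essential reason the argument works in the positional case but would fail for general perfect-recall strategies is that every extension $s''\sqsupseteq s_C$ still restricts to the same positional $s_C$, whose value at a history depends only on the final state, and that final state is captured propositionally by one of the $\varphi_n$.
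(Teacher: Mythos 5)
Your construction of $RS$ is exactly the paper's: enumerate the finite image $\{\alpha_C^1,\ldots,\alpha_C^N\}=s_C(\St)$, take the defining propositional formulas $\xi_n$ of the preimages as conditions, and the completeness argument is likewise identical. Where you diverge is in establishing $\langle 0,\lambda,s\rangle\vDash[C\perf]RS$: the paper routes this through Proposition~\ref{Proposition: main result}, by showing that the partial strategy $s^{RS}_C$ induced by $RS$ is total and coincides with $s_C$ (hence is play-equivalent to it), and then citing the general equivalence between $[C\perf]RS$ and play-equivalence with $s^{RS}_C$. You instead unfold the nested $[C\sstit]$ and $\henceforth$ quantifiers by hand and verify each conditional $\xi_n\rightarrow[C\sstit]\alpha_C^n$ pointwise, using that every extension $s''\sqsupseteq s_C$ restricts to the same positional $s_C$, the truth clause for action types, and the validity $\alpha_C\leftrightarrow[C\sstit]\alpha_C$. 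Both arguments are sound; yours is more elementary and self-contained, while the paper's makes visible that the hypotheses of the general correspondence result (moment-determinate conditions, $s^{RS}_C$ defined on all relevant histories) are met and thereby reuses the machinery that also drives the other representation results. Your explicit remark that the profile $s$ in the conclusion must be read as an arbitrary extension of $s_C$ compatible with $\lambda$, and that $[C\perf]RS$ depends only on $s_C$, correctly patches a small looseness in the statement that the paper leaves implicit.
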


This shows that, under certain semantic constraints, a given positional strategy is represented by a complete proposition-action strategy. Thereby we can move strategic reasoning in the semantics about positional strategies to reasoning on the syntactic level about proposition-action strategies. In conclusion, we show that in a common class of CGMs, complete proposition-action strategies correspond to positional strategies up to play-equivalence: 

\begin{corollary}
\label{Corollary: Characterizing positional strategies}
Let $s$ be a perfect recall strategy for $C$, and let $\lambda\in\outplays(\lambda[0],s_{C})$. Let $\St$ be finite and let every state be propositionally definable.
Then the following are equivalent
\begin{enumerate}
\item there is a proposition-action strategy $RS$ for $C$ such that $\langle k,\lambda,s\rangle\vDash[C\perf]RS\land\Box\henceforth\bigvee_{c\in\Cond(RS)} c$,
\item at $\lambda[0]$ the perfect recall strategy $s$ is play-equivalent
to a positional strategy.
\end{enumerate}
\end{corollary}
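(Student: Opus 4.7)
The plan is to derive both directions of the equivalence from Propositions~2 and~3, with the added conjunct $\Box\henceforth\bigvee_{c\in\Cond(RS)}c$ in clause~(1) supplying the completeness hypothesis of Proposition~2, and the global assumptions that $\St$ is finite and every state is propositionally definable supplying the hypotheses of Proposition~3. The footnote in Definition~1 asserting that play-equivalent strategies are in fact identical on their common outcome plays will be the key to transferring performance statements between play-equivalent strategies.

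For (1)$\Rightarrow$(2), I would first unpack $\Box\henceforth\bigvee_{c\in\Cond(RS)}c$ at the current point using the truth conditions for $\Box$ and $\henceforth$: it says exactly that at every possible future point one of the conditions in $RS$ is triggered, which is precisely the completeness of $RS$ at $\lambda[0]$. The second conjunct $[C\perf]RS$ is then the remaining hypothesis of Proposition~2, which applies and delivers a positional $\hat{s}_C$ play-equivalent to $s_C$ at $\lambda[0]$, i.e., clause~(2).

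For (2)$\Rightarrow$(1), fix a positional $\hat{s}_C$ play-equivalent to $s_C$ at $\lambda[0]$. Finiteness of $\St$ makes $\hat{s}_C(\St)\subseteq\Act^C$ finite, and since each state is propositionally definable, each preimage $\hat{s}_C^{-1}(\alpha_C)$ is a finite union of propositionally definable singletons and therefore itself propositionally definable. Proposition~3 accordingly delivers a proposition-action strategy $RS$ complete at $\lambda[0]$ with $\langle 0,\lambda,\hat{s}\rangle\vDash[C\perf]RS$ for any profile $\hat{s}$ extending $\hat{s}_C$; the completeness of $RS$ at $\lambda[0]$ is by definition exactly $\Box\henceforth\bigvee_{c\in\Cond(RS)}c$ at $\langle 0,\lambda,\hat{s}\rangle$. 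To transfer $[C\perf]RS$ to the original profile $s$, I would invoke the footnote in Definition~1: since $\outplays(\lambda[0],s_C)=\outplays(\lambda[0],\hat{s}_C)$, the strategies $s_C$ and $\hat{s}_C$ agree on every history in this common outcome set.

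The main obstacle is to verify that $[C\perf]RS$ is genuinely play-invariant in this sense. Unfolding $[C\perf]RS$ as $[C\sstit]\henceforth[C\sstit]\bigwedge_n(c_n\rightarrow[C\sstit]e_n)$, each occurrence of $[C\sstit]$ and each atom $\alpha_C$ reads the current $C$-strategy; what needs checking is that every such reading happens at a history lying inside $\outplays(\lambda[0],s_C)$. The outermost $[C\sstit]$ restricts attention to outcome plays of $s_C$ from $\lambda[0]$; the subsequent $\henceforth$ walks along one such play; and each inner $[C\sstit]$ passes to outcome plays of the same $C$-strategy from a later position, which remain inside the outermost outcome set. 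Once this containment is verified, the agreement between $s_C$ and $\hat{s}_C$ on outcome histories lifts $\langle 0,\lambda,\hat{s}\rangle\vDash[C\perf]RS$ to $\langle 0,\lambda,s\rangle\vDash[C\perf]RS$, completing~(1).
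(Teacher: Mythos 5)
Your derivation is correct and matches the intended route: the paper offers no separate proof of this corollary precisely because it is meant to fall out of Propositions \ref{Proposition: Representing positional strategies} and \ref{Proposition: Propositional definability and representing positional strategies} exactly as you combine them, with the added conjunct supplying completeness in one direction and finiteness plus propositional definability of states supplying the preimage-definability hypothesis in the other. The one step you rightly flag as needing care --- transferring $[C\perf]RS$ from the profile extending $\hat{s}_C$ back to $s$ --- can be discharged a little more economically than your syntactic walk through the nested $[C\sstit]$ operators by invoking Proposition \ref{Proposition: main result}(2) twice: the partial strategy $s^{RS}_C$ is defined on the common outcome set, is play-equivalent to $\hat{s}_C$ there, hence play-equivalent to $s_C$ by transitivity, which yields $\langle 0,\lambda,s\rangle\vDash[C\perf]RS$ directly.
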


This corollary uncovers that completeness is an underlying condition  for positional strategies. Although this discovery is unsurprising and intuitive, it shows that our language is able to express such underlying intuitions. 


\subsection{Representing uniform strategies}\label{Subsection: Representing uniform strategies}

Here we prove that rule-based strategies are useful for representing uniform strategies, focussing on individuals' uniform strategies, using a type of rule-based strategies: 

\begin{definition}
\label{Definition: Knowledge-action strategies}
A \emph{knowledge-action strategy $RS$ }for a coalition $C$ is a
rule-based strategy such that the conditions are of the
form $K_{i}c$ with $c$ and the effects $e$ respectively of the form:

\begin{center}
\begin{tabular}{lll}
$c ::= p\mid c\land c\mid\lnot c$ & \quad \quad & $e ::= \alpha_{C}$,
\end{tabular}

\end{center}

\noindent where $p$ ranges over $\Prop$ and $\alpha_{C}$ ranges over $\Sigma^{C}$. 
\end{definition}



\begin{proposition}
\label{Proposition: Representing uniform strategies}
Let $\langle0,\lambda,s\rangle$ be a tuple. Define $H=\{\lambda'[0,k']\mid$ $\lambda'[0]\sim_{i}\lambda[0]\}$. Suppose there is a knowledge-action strategy $RS$ for agent $i$ such that

\begin{enumerate}
\item agent $i$ knows that $RS$ is complete at $\lambda[0]$, i.e.~$\langle0,\lambda,s\rangle\vDash K_{i}\Box\henceforth\bigvee_{K_i c\in\Cond(RS)} K_i c$;
\item agent $i$ knows that she is henceforth able to act according to $RS$,
i.e.~$\langle0,\lambda,s\rangle\vDash K_{i}\henceforth\Diamond[i\acc]RS$;
\item agent $i$ performs strategy $S$, i.e.~$\langle0,\lambda,s\rangle\vDash[i\perf]RS$.
\end{enumerate}

\noindent Then there is a uniform strategy $\hat{s}_{i}$ on $H$ that is play-equivalent to $s_i$.

\end{proposition}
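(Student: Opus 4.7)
The plan is to take $\hat{s}_i$ to be the restriction to $H$ of the partial strategy $s_i^{RS}$ induced by $RS$ in Section~\ref{Section: Rule based strategies}, and to show three things in turn: $s_i^{RS}$ is defined on every history in $H$, $s_i^{RS}$ is uniform on $H$, and $s_i^{RS}$ is play-equivalent to $s_i$ at $\lambda[0]$. The first ingredient is that every condition in $\Cond(RS)$ has the form $K_i c$ and so is moment-determinate, which means that Proposition~\ref{Proposition: main result} is applicable and that the truth value of any condition at a history depends only on the $\sim_i$-equivalence class of that history.

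For definedness on $H$, pick an arbitrary $\lambda'[0,k'] \in H$, so $\lambda'[0]\sim_i \lambda[0]$. Unfolding the semantics of $K_i$, $\Box$, and $\henceforth$ in assumption~(1) at $\langle 0,\lambda,s\rangle$ shows that some $K_i c \in \Cond(RS)$ is triggered at $\lambda'[0,k']$; similarly assumption~(2) yields $\lambda'[0,k']\vDash \Diamond[i\acc]RS$. Because every effect in $RS$ is an atomic action type $\alpha_i$ and each triggered rule $K_i c_n \mapsto \alpha_n$ forces the agent to actually play $\alpha_n$, the feasibility witnessed by $\Diamond[i\acc]RS$ forces all simultaneously triggered effects to coincide in a single executable action. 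This is exactly the definedness clause for $s_i^{RS}$ discussed after Definition~\ref{Definition: Performing rule-based strategies}. Uniformity is then immediate: if $\lambda'[0,k']\sim_i \lambda''[0,k'']$, then by moment-determinacy of $K_i c$ the set of triggered conditions at the two histories coincides, so $s_i^{RS}$ assigns them the same action.

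For play-equivalence, I would invoke Proposition~\ref{Proposition: main result}(2) at the tuple $\langle 0,\lambda,s\rangle$ with $C=\{i\}$. Every $\lambda'' \in \outplays(\lambda[0],s_i)$ begins at $\lambda[0]$, hence every $\lambda''[0,k'']$ lies in $H$, where definedness has just been established; combined with assumption~(3) giving $\langle 0,\lambda,s\rangle\vDash [i\perf]RS$, Proposition~\ref{Proposition: main result}(2) delivers $\outplays(\lambda[0],s_i^{RS})=\outplays(\lambda[0],s_i)$, i.e.\ play-equivalence at $\lambda[0]$. Restricting $s_i^{RS}$ to $H$ yields the required $\hat{s}_i$. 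The main obstacle is the totality step: one has to unpack the nested $K_i\Box\henceforth$ and $K_i\henceforth\Diamond$ modalities at an arbitrary history in $H$ and simultaneously exploit the atomic shape of the effects to rule out conflicting triggers. The remaining steps are essentially semantic bookkeeping once Proposition~\ref{Proposition: main result} is in hand.
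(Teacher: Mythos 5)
Your proposal is correct and follows essentially the same route as the paper: take the partial strategy induced by $RS$, use assumptions (1) and (2) to establish definedness on all of $H$ (with the atomic effects forcing uniqueness), read off uniformity from the $K_i$-shape of the conditions, and get play-equivalence from assumption (3) via Proposition \ref{Proposition: main result}(2). One small labelling quibble: the invariance of the conditions under $\sim_i$-equivalent histories (used for uniformity) comes from the semantics of $K_i$ and the fact that $\sim_i$ is an equivalence relation, not from moment-determinacy, which is the separate property (independence from the current strategy and future) needed to make Proposition \ref{Proposition: main result} applicable.
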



This result is similar to Proposition \ref{Proposition: Representing positional strategies} on positional strategies. Here we see that whenever an agent performs a knowledge-action strategy of which he knows both that it is complete and that he is henceforth able to act accordingly, the agent is performing a strategy that is play-equivalent to a uniform strategy. This means that, under certain syntactically representable epistemic conditions, a syntactically characterised knowledge-action strategy corresponds with a uniform strategy in the semantic structures. 

The converse does not hold, as can be shown by providing a CEGM containing a uniform strategy that differs at two distinguishable propositionally equivalent states. But the mismatch runs deeper because of the restrictions on the conditions of knowledge-action strategies. To obtain a correspondence result in line with Corollary \ref{Corollary: Characterizing positional strategies} we believe that the language has to be extended with temporal modalities referring to the past and the conditions of knowledge-action strategies have to be modified accordingly. To keep the current exposition accessible we leave this for another occasion. In spite of these simplifications, a representation result for uniform positional strategies can be proven:

\begin{proposition}\label{Proposition: Representing uniform strategies propositionally definability}
Let $\langle0,\lambda,s\rangle$ be a profile. Define $H=\{\lambda'[0,k']\mid\lambda'[0]\sim_{i}\lambda[0]\}$. Suppose $s_{i}$ is a uniform positional strategy for agent $i$ on $H$. Suppose that $s_{i}(\St)\subseteq\Act$ is finite and that every $s_i^{-1}(\alpha_i)$ is propositionally definable.

Then there is a knowledge-action strategy $RS$ for $i$ such that 
\begin{enumerate}
\item agent $i$ knows that $RS$ is complete at $\lambda[0]$, i.e.~$\langle0,\lambda,s\rangle\vDash K_{i}\Box\henceforth\bigvee_{K_i c\in\Cond(RS)}K_i c$; 
\item agent $i$ knows that she is henceforth able to act according to $RS$,
i.e.~$\langle0,\lambda,s\rangle\vDash K_{i}\henceforth\Diamond[i\acc]RS$;
\item $\langle0,\lambda,s\rangle\vDash[i\perf]RS$.
\end{enumerate}
\end{proposition}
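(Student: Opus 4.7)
The plan is to mirror the construction of Proposition~\ref{Proposition: Representing positional strategies}, replacing propositional triggers with epistemic ones so as to reflect uniformity. Enumerate the finite set $s_i(\St) = \{\alpha^1, \ldots, \alpha^N\}$ and, using the definability hypothesis, pick propositional formulas $c_1, \ldots, c_N$ such that, for each $n$, $\lambda[k] \vDash c_n$ iff $s_i(\lambda[k]) = \alpha^n$. Define the knowledge-action strategy
\[
RS = \{K_i c_n \mapsto \alpha^n \mid n \leq N\}.
\]

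The key structural fact driving every subsequent step is that, because $s_i$ is simultaneously uniform and positional, each preimage $s_i^{-1}(\alpha^n)$ is closed under $\sim_i$. Unfolding the history-level semantics of $K_i$, this closure yields two consequences that I would state as auxiliary claims: (i) at any history $\lambda'[0,k]$ with $\lambda'[0] \sim_i \lambda[0]$, one has $\lambda'[k] \vDash c_n$ if and only if $\langle k,\lambda',s'\rangle \vDash K_i c_n$; and (ii) the $K_i c_n$ are mutually exclusive, since a single state cannot lie in two distinct preimages. Hence at every history in $H$ exactly one trigger fires, namely the one matching $s_i$'s recommended action.

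With these two claims, the three required conjuncts follow in parallel to Proposition~\ref{Proposition: Representing positional strategies}. For completeness (1), at any counterfactual history $\lambda'[0,k]$ reachable from a starting state $\sim_i$-equivalent to $\lambda[0]$, totality of $s_i$ gives some $\alpha^n = s_i(\lambda'[k])$, whence by (i) the formula $K_i c_n$ holds—so $K_i\Box\henceforth\bigvee_n K_i c_n$ holds at $\langle 0,\lambda,s\rangle$. For ability (2), taking $s_i$ itself as witness, the unique triggered rule $K_i c_n$ at $\lambda'[0,k]$ is precisely satisfied by $s_i$'s action $\alpha^n$, so $\Diamond[i\acc]RS$ holds throughout, giving $K_i\henceforth\Diamond[i\acc]RS$. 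For performance (3), invoke Proposition~\ref{Proposition: main result}(2): by (i) and (ii), the partial strategy $s^{RS}_i$ induced by $RS$ is defined on every history in $H$ and agrees pointwise with $s_i$ there; since $\outplays(\lambda[0],s_i)$ contains only histories whose first state is $\lambda[0]$ and is therefore a subset of $H$, play-equivalence of $s^{RS}_i$ and $s_i$ at $\lambda[0]$ follows, yielding $\langle0,\lambda,s\rangle\vDash[i\perf]RS$.

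The main obstacle I anticipate is claim (i): one must carefully unfold $K_i$ at a history $\lambda'[0,k]$ (where the extension of $\sim_i$ is componentwise and requires matching length), then use the $\sim_i$-closure of the preimages—itself a consequence of uniformity combined with positionality—to collapse $K_i c_n$ to $c_n$ along exactly the counterfactual states the truth conditions traverse. Once this bridging lemma between the epistemic condition and the semantic preimage is established, the remainder of the argument is largely a rerun of the positional case with $K_i c_n$ in place of $c_n$.
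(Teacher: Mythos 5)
Your proposal is correct and follows essentially the same route as the paper's proof: the same rule set $\{K_i\xi_n\mapsto\alpha_i^n\}$ built from the propositional definitions of the preimages, with the same key bridging lemma that uniformity plus positionality forces $\xi_n\leftrightarrow K_i\xi_n$ at every history in $H$, after which completeness, ability, and performance reduce to the positional case via Proposition \ref{Proposition: main result}. The only (cosmetic) difference is that the paper first constructs the purely propositional strategy of Proposition \ref{Proposition: Propositional definability and representing positional strategies} and then upgrades its conditions to $K_i\xi_n$, whereas you define the knowledge-action strategy directly.
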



Note that this proposition starts with a uniform \emph{positional} strategy. This result establishes that, under certain semantic restrictions, a uniform positional strategy is represented by a knowledge-action strategy of which one knows both that it is complete and that one can henceforth act accordingly. This shows that strategic reasoning in the semantics on uniform positional strategies can be diverted to reasoning on the syntactic level about knowledge-action strategies. 

Our representation result reveals crucial underlying conditions for uniform positional strategies, which are expressible in our language. Indeed, the previous proposition shows that, under certain model restrictions, performing a uniform positional strategy implies that one is performing a knowledge-action strategy and one knows both that this knowledge-action strategy is complete and that one is henceforth able to act accordingly. Revealing such underlying conditions enhances our understanding and triggers further questions; two of such inquiries are discussed below. 

Does ensuring a property $\varphi$ by performing a uniform strategy \emph{entail} that knowing that performing this uniform strategy ensures that property? According to our representation result, this translates to questioning whether $[i\sstit]\varphi\land[i\perf]RS \land K_{i}\Box\henceforth\bigvee_{K_i c\in\Cond(RS)}K_i c\land K_{i}\henceforth\Diamond[i\acc]RS$ logically entails $K_i\Box ([i\perf]RS\rightarrow [i\sstit]\varphi)$ (where $RS$ is a knowledge-action strategy). It turns out that this indeed fails since ``in order to identify a successful strategy, the agents must consider not only the courses of action, starting from the current state of the system, but also from states that are indistinguishable from the current one.'' \cite[p.\,574]{agotnes_knowledge_2015} Uniform strategies are therefore not faithful to the expectation that ``the agent has enough control and knowledge to identify and execute a strategy that enforces [a certain property] $\varphi$.'' \cite[p.\,574]{agotnes_knowledge_2015} An agent has this control and knowledge if and only if there is a knowledge-action strategy $RS$ satisfying $K_{i}\Box\henceforth\bigvee_{K_i c\in\Cond(RS)}K_i c\land K_{i}\henceforth\Diamond[i\acc]RS\land K_i\Box ([i\perf]RS\rightarrow [i\sstit]\varphi)$. This discussion highlights the flexibility of our syntactical approach to correct the flaw of uniform strategies.

What \emph{is} a coalition's uniform strategy? Formally, it is a tuple of individuals' uniform strategies; intuitively, it is intended to capture a coalition's control and knowledge to identify and execute a strategy that enforces a certain property $\varphi$. Does a coalition's uniform strategy meet this intuition? No, it does not. In \cite[pp.~575-576]{agotnes_knowledge_2015} it is argued that ``there are several different ``modes'' in which [a coalition] can know the right strategy'', pointing to a choice between common, mutual, or distributed knowledge of the right coalitional strategy $s_C$.%
\footnote{They also mention the option that ``the strategy $s_C$ can be identified by'' (altered notation) a leader, headquarters committee, or consulting company. Our representation result suggests that the syntactical counterpart of these ``modes'' is straightforward by replacing $K_i$'s with the respective group knowledge in Proposition \ref{Proposition: Representing uniform strategies propositionally definability}. We will not pursue this suggestion in further detail here. } 
Our results, however, solicit a view, complementing the aforementioned modes, to adequately conceptualize a coalition's uniform strategies:
First, because a coalition's uniform strategy is merely a tuple of individuals' uniform strategies, a coalition's uniform strategy $s_C$ does not require that any member can \emph{identify} the coalitional strategy $s_C$. Indeed, a coalition's uniform strategy $s_C$ merely requires every member to know their part $s_i$ of the coalitional strategy $s_C$. So the \emph{object} of the members' knowledge differs. 
Second, since every member knows their part $s_i$ of a coalition's uniform strategy $s_C$, it follows that a coalition's uniform strategy entails that the coalition has distributed knowledge of the right coalitional strategy $s_C$, positioning a coalition's uniform strategy between the modes of mutual and distributed knowledge of the right coalitional strategy. 
Third, this knowledge is, however, distributed in a very particular way, namely by every member knowing \emph{their own part} $s_i$. For instance, whenever $i$ knows $j$'s part $s_j$ and $j$ knows $i$'s part $s_i$, then they have distributed knowledge of $\langle s_i,s_j\rangle$ even though it is not a uniform strategy. A coalition's uniform strategy hence does not correspond to any of the ``modes'' in \cite{agotnes_knowledge_2015}. It seems that this distinctive way of distributing knowledge solicits a  comparison, not with the proposed modes in \cite{agotnes_knowledge_2015} but, with different ways of distributing knowledge of a coalitional strategy $s_C$. 


\section{\label{Conlusion}Conclusion}
We have shown that, under certain model restrictions, a strategy that is play-equivalent to a positional strategy \emph{corresponds} to a complete proposition-action strategy. Thereby we have established a firm correspondence between a semantic strategy type and a syntactic one. 

In our research on individuals' uniform strategies, we have proven that any knowledge-action strategy of which one knows both that one can henceforth act accordingly and that it is complete \emph{represents}    a strategy that is play-equivalent to a uniform strategy. Conversely, under certain semantic restrictions, a uniform positional strategy \emph{is represented by} a knowledge-action strategy of which one knows both that one can henceforth act accordingly and that it is complete. This latter result exposes the implicit conditions of uniform positional strategies. 


The current enterprise is a crucial first step in facilitating strategic reasoning at the syntactic level. By representing several semantic strategy types and drawing novel conceptual implications we have shown the fruitfulness of our syntactic approach to enhance our understanding of semantic strategy types.

\bibliography{references}
\bibliographystyle{eptcs}

\appendix
\section{Appendix: Proof sketches of propositions}
\setcounter{proposition}{0}

\begin{proposition}
\label{Proposition: main result} Let $RS$ be a rule-based
strategy with only moment-determinate conditions. Let $s^{RS}_{C}$ be
the partial coalition strategy defined by $RS$. 
Then 
\begin{enumerate}
\item for any profile $\langle k',\lambda',s'\rangle$ such that $s^{RS}_C$ is defined on $\lambda'[0,k']$ we have that the following are equivalent: (a) $\langle k',\lambda',s'\rangle\vDash[C\acc]RS$ and (b) $s^{RS}_C(\lambda'[0,k'])=s'_C(\lambda'[0,k'])$.
\item Let $\langle k',\lambda',s'\rangle$ be a profile such that $s^{RS}_{C}$ is defined on all histories in $\outplays(\lambda'[0,k'],s'_{C})$. Then $\langle k',\lambda',s'\rangle\vDash[C\perf]RS$ if and only if $s^{RS}_{C}$ and $s_{C}'$ are play-equivalent at $\lambda'[0,k']$.
\end{enumerate}
\end{proposition}

\begin{proof}[Proof sketch:]
1. Follows from the fact that $s^{RS}_{C}$ is defined at a history iff there is exactly one way to act according to $RS$ at that history. 2. Follows straightforwardly from 1.~and property 5(b) in Definition \ref{Definition: CGM}.
\end{proof}

\begin{proposition}
\label{Proposition: Representing positional strategies}Let $s$ be a strategy profile, and let $\lambda\in\outplays(\lambda[0],s)$. Suppose there is a proposition-action strategy $RS$ for $C$ that is complete at $\lambda[0]$ such that $\langle0,\lambda,s\rangle\vDash[C\perf]RS$.
Then there is a positional strategy $\hat{s}_{C}$ that is play-equivalent to $s_{C}$ at $\lambda[0]$. 
\end{proposition}

\begin{proof}[Proof sketch:]
It is easy to show that the partial strategy defined by $RS$ is defined on and positional for histories in $\outplays(\lambda[0],s_C)$. This partial positional strategy can be trivially extended to a positional strategy, thereby proving the proposition.
\end{proof}

\begin{proposition}
\label{Proposition: Propositional definability and representing positional strategies}
Let $s_{C}$ be a positional $C$-strategy,
and let $\lambda\in\outplays(\lambda[0],s_{C})$. Suppose that $s_{C}(\St)\subseteq\Act^{C}$
is finite and that every $s_{C}^{-1}(\alpha_{C})\subseteq\St$ is
propositionally definable. Then there is a proposition-action strategy
$RS$ for $C$ that is complete at $\lambda[0]$ such that $\langle0,\lambda,s\rangle\vDash[C\perf]RS$.
\end{proposition}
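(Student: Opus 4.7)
The plan is to construct a proposition-action strategy $RS$ directly from $s_C$ using the propositional definability assumption, and then verify both completeness and performance by invoking Proposition~\ref{Proposition: main result}. Since $s_C(\St)$ is finite, enumerate its image as $\{\alpha_C^1,\ldots,\alpha_C^N\}$, and for each $n\leq N$ let $\xi_n$ be a propositional formula defining $s_C^{-1}(\alpha_C^n)\subseteq\St$. Set $RS:=\{\xi_1\mapsto\alpha_C^1,\ldots,\xi_N\mapsto\alpha_C^N\}$. Because propositional formulas are moment-determinate (in fact state-determinate), $RS$ is a bona fide proposition-action strategy and the induced partial strategy $s_C^{RS}$ falls under the scope of Proposition~\ref{Proposition: main result}.

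For completeness, since $s_C$ is a total positional strategy the preimages $\{s_C^{-1}(\alpha_C^n)\}_{n\leq N}$ partition $\St$, so every state satisfies exactly one $\xi_n$; this immediately yields $\lambda[0]\vDash\Box\henceforth\bigvee_{n\leq N}\xi_n$. The same partition property shows that $s_C^{RS}$ is defined on every history: at $\lambda'[0,k']$ the unique $n$ with $\lambda'[k']\vDash\xi_n$ forces acting according to $RS$ to coincide with performing $[C\sstit]\alpha_C^n$, and this action is executable at $\lambda'[k']$ since $s_C(\lambda'[k'])=\alpha_C^n\in\act(\lambda'[k'],C)$. Hence $s_C^{RS}(\lambda'[0,k'])=\alpha_C^n=s_C(\lambda'[k'])$.

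For performance, $s_C^{RS}$ agrees with $s_C$ as a function of the current state, so both strategies induce identical outcome plays from $\lambda[0]$ and are therefore play-equivalent there. Proposition~\ref{Proposition: main result}(2) then delivers $\langle 0,\lambda,s\rangle\vDash[C\perf]RS$, where $s$ is any strategy profile extending $s_C$ with $\lambda\in\outplays(\lambda[0],s)$. The only real subtlety is verifying the uniqueness clause in the definition of $s_C^{RS}$, i.e.~that exactly one $\alpha_C$ satisfies $\Box([C\acc]RS\to[C\sstit]\alpha_C)$ at each history; but this follows from pairwise disjointness of the preimages $s_C^{-1}(\alpha_C^n)$, which is automatic because $s_C$ is a function. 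Beyond this bookkeeping, the heavy lifting has already been carried out in Proposition~\ref{Proposition: main result}, so no serious obstacle remains.
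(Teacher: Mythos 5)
Your proof is correct and follows essentially the same route as the paper: the same rule-based strategy $RS=\{\xi_1\mapsto\alpha_C^1,\ldots,\xi_N\mapsto\alpha_C^N\}$ built from the defining formulas of the preimages, completeness from the fact that the preimages of a total positional strategy cover $\St$, and agreement of the induced partial strategy with $s_C$ feeding into Proposition~\ref{Proposition: main result}(2). No substantive differences to report.
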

\begin{proof}[Proof sketch:]
Let $\{\alpha_{C}^{1},\ldots,\alpha_{C}^{N}\}=s_{C}(\St)$, and let us denote the propositional formula defining $s_{C}^{-1}(\alpha_{C}^{n})$
by $\xi_{n}$ for each $n\leq N$. The proposition-action strategy $RS:=\{\xi_{1}\mapsto\alpha_{C}^{1},\ldots,\xi_{N}\mapsto\alpha_{C}^{N}\}$ can be used to prove the proposition.
\end{proof}

\begin{proposition}\label{Proposition: Representing uniform strategies}
Let $\langle0,\lambda,s\rangle$ be a tuple. Define $H=\{\lambda'[0,k']\mid$ $\lambda'[0]\sim_{i}\lambda[0]\}$. Suppose there is a knowledge-action strategy $RS$ for agent $i$ such that

\begin{enumerate}
\item agent $i$ knows that $RS$ is complete at $\lambda[0]$, i.e.~$\langle0,\lambda,s\rangle\vDash K_{i}\Box\henceforth\bigvee_{K_i c\in\Cond(RS)}K_i c$; 

\item agent $i$ knows that she is henceforth able to act according to $RS$,
i.e.~$\langle0,\lambda,s\rangle\vDash K_{i}\henceforth\Diamond[i\acc]RS$;

\item agent $i$ performs strategy $S$, i.e.~$\langle0,\lambda,s\rangle\vDash[i\perf]RS$.
\end{enumerate}

\noindent Then there is a uniform strategy $\hat{s}_{i}$ on $H$ that is play-equivalent to $s_i$.

\end{proposition}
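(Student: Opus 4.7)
The plan is to exploit Proposition \ref{Proposition: main result} applied to the partial strategy $s^{RS}_i$ defined by $RS$. I will argue that this partial strategy is defined and uniform throughout $H$, and then conclude play-equivalence with $s_i$.

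First, I would establish that $s^{RS}_i$ is defined on every history in $H$. Assumption (1) gives $\langle 0,\lambda,s\rangle\vDash K_i\Box\henceforth\bigvee_{K_i c\in\Cond(RS)} K_i c$, which by the semantics of $K_i$ ensures that every history $\lambda'[0,k'] \in H$ triggers at least one condition of $RS$. Assumption (2) guarantees $\Diamond[i\acc]RS$ at every such history, ruling out practical conflicts. Because the effects of a knowledge-action strategy are single action types $\alpha_i$, at most one action profile can realize $[i\acc]RS$ once a condition holds. Hence $s^{RS}_i(\lambda'[0,k'])$ is well-defined for every $\lambda'[0,k'] \in H$.

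Second, I would verify the uniformity of $s^{RS}_i$ on $H$. Every condition in $\Cond(RS)$ has the form $K_i c$, and any such formula is $\sim_i$-invariant on histories by the truth condition for $K_i$. Consequently, if $\lambda_1[0,k_1], \lambda_2[0,k_2] \in H$ satisfy $\lambda_1[0,k_1] \sim_i \lambda_2[0,k_2]$, they satisfy exactly the same subset of $\Cond(RS)$. Since $s^{RS}_i$ is determined entirely by which conditions are triggered (via the uniquely matching effect), it follows that $s^{RS}_i(\lambda_1[0,k_1]) = s^{RS}_i(\lambda_2[0,k_2])$. Setting $\hat{s}_i := s^{RS}_i$ on $H$, and extending arbitrarily to histories outside $H$ (note that $H$ is a union of $\sim_i$-classes of histories, so no extension choice outside $H$ can violate uniformity within $H$), yields a total uniform strategy.

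Third, by assumption (3) and Proposition \ref{Proposition: main result}(2) we obtain that $s^{RS}_i$ and $s_i$ are play-equivalent at $\lambda[0]$, so $\hat{s}_i$ is play-equivalent to $s_i$ as required. The main obstacle is the uniformity argument in the second step, which crucially exploits both the syntactic restriction that defines knowledge-action strategies (conditions of the form $K_i c$) and the corresponding semantic fact that the truth of $K_i\varphi$ on a history depends only on its $\sim_i$-class. The remaining steps are direct consequences of Proposition \ref{Proposition: main result} and of the completeness and ability assumptions.
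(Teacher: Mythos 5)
Your proposal is correct and follows essentially the same route as the paper's proof: establish that the partial strategy $s^{RS}_i$ is defined on all of $H$ from the knowledge of completeness and of the ability to act (plus the single-action-type effects), derive uniformity from the $\sim_i$-invariance of the $K_i c$ conditions, and conclude play-equivalence with $s_i$ via Proposition \ref{Proposition: main result} and assumption (3). The extra remark that $H$ is a union of $\sim_i$-classes is a harmless (and correct) addition, since the proposition only requires uniformity on $H$.
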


\begin{proof}[Proof sketch:]
The partial strategy defined by $RS$ is defined and uniform on histories in $H$. 
\end{proof}

\begin{proposition}\label{Proposition: Representing uniform strategies propositionally definability}
Let $\langle0,\lambda,s\rangle$ be a profile. Define $H=\{\lambda'[0,k']\mid\lambda'[0]\sim_{i}\lambda[0]\}$. Suppose $s_{i}$ is a uniform positional strategy for agent $i$ on $H$. Suppose that $s_{i}(\St)\subseteq\Act$ is finite and that every $s_i^{-1}(\alpha_i)$ is propositionally definable.

Then there is a knowledge-action strategy $RS$ for $i$ such that 
\begin{enumerate}
\item agent $i$ knows that $RS$ is complete at $\lambda[0]$, i.e.~$\langle0,\lambda,s\rangle\vDash K_{i}\Box\henceforth\bigvee_{K_i c\in\Cond(RS)}K_i c$;
\item agent $i$ knows that she is henceforth able to act according to $RS$,
i.e.~$\langle0,\lambda,s\rangle\vDash K_{i}\henceforth\Diamond[i\acc]RS$;
\item $\langle0,\lambda,s\rangle\vDash[i\perf]RS$.
\end{enumerate}
\end{proposition}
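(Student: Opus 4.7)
The plan is to mirror the construction used in Proposition \ref{Proposition: Propositional definability and representing positional strategies}, but then to upgrade the purely propositional conditions to epistemic ones by exploiting uniformity. Concretely, I would let $\{\alpha_i^1,\ldots,\alpha_i^N\}=s_i(\St)$ and let $\xi_n$ be the propositional formula defining $s_i^{-1}(\alpha_i^n)$ for each $n\leq N$, and then define the knowledge-action strategy
\[
RS:=\{K_i\xi_1\mapsto\alpha_i^1,\ldots,K_i\xi_N\mapsto\alpha_i^N\}.
\]
This is a knowledge-action strategy in the sense of Definition \ref{Definition: Knowledge-action strategies}, and by construction each state lies in exactly one $s_i^{-1}(\alpha_i^n)$ (since $s_i$ is total on $\St$).

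The central step, which I expect to be the main obstacle, is to establish the equivalence $\xi_n\leftrightarrow K_i\xi_n$ on $H$. First I would observe that $H$ is closed under $\sim_i$: if $\lambda'[0,k']\in H$ and $\mu[0,m]\sim_i\lambda'[0,k']$ then $\mu[0]\sim_i\lambda'[0]\sim_i\lambda[0]$, so $\mu[0,m]\in H$. Next, suppose $\lambda'[0,k']\in H$ with $\lambda'[k']\vDash\xi_n$, i.e.~$s_i(\lambda'[k'])=\alpha_i^n$. For any $\mu[0,m]\sim_i\lambda'[0,k']$ we have $\mu[0,m]\in H$, so uniformity of $s_i$ on $H$ gives $s_i(\mu[m])=s_i(\lambda'[k'])=\alpha_i^n$, hence $\mu[m]\vDash\xi_n$. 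This yields $\lambda'[0,k']\vDash K_i\xi_n$. The converse direction is immediate from factivity of $K_i$. The delicate point here is that $\xi_n$ is a purely propositional formula defining the preimage globally on $\St$, while uniformity only holds on $H$; the closure of $H$ under $\sim_i$ is precisely what makes this work.

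With this equivalence in hand, the remaining three conditions fall out in parallel with the proof of Proposition \ref{Proposition: Propositional definability and representing positional strategies}. For (1), since $s_i$ is total every state satisfies $\bigvee_{n\leq N}\xi_n$, and by the equivalence every history in $H$ satisfies $\bigvee_{n\leq N}K_i\xi_n$; since this holds on all of $H$, agent $i$ knows that $RS$ is complete at $\lambda[0]$. For (2), note that at any history in $H$ exactly one $K_i\xi_n$ is triggered and the corresponding $\alpha_i^n$ is executable (it is the move prescribed by $s_i$), so $\Diamond[i\acc]RS$ holds throughout $H$, and $K_i\henceforth\Diamond[i\acc]RS$ follows from the closure of $H$ under $\sim_i$. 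For (3), the partial strategy $s^{RS}_i$ defined by $RS$ coincides with $s_i$ on every history in $H$ (by the equivalence and the fact that each $\alpha_i^n$ is the unique way to act according to $RS$ when $K_i\xi_n$ is triggered), and in particular on every history in $\outplays(\lambda[0],s_i)$; by Proposition \ref{Proposition: main result}(2), $\langle 0,\lambda,s\rangle\vDash[i\perf]RS$.
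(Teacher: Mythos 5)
Your proof is correct and takes essentially the same route as the paper's: the same propositional formulas $\xi_n$ defining the preimages $s_i^{-1}(\alpha_i^n)$, the same knowledge-action strategy $\{K_i\xi_1\mapsto\alpha_i^1,\ldots,K_i\xi_N\mapsto\alpha_i^N\}$, and the same key lemma that $\xi_n\leftrightarrow K_i\xi_n$ holds on $H$ by uniformity, the only presentational difference being that the paper first runs the argument for the intermediate propositional rule-based strategy $\{\xi_n\mapsto\alpha_i^n\}$ and then transfers its properties to the epistemic one. Your explicit remark that $H$ is closed under $\sim_i$ (needed to apply uniformity, which is only assumed on $H$) is a point the paper leaves implicit.
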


\begin{proof}[Proof sketch:]
Analogous to Proposition \ref{Proposition: Propositional definability and representing positional strategies}.
\end{proof}
\end{document}